\documentclass[letterpaper, 10 pt, conference]{ieeeconf}  

\IEEEoverridecommandlockouts                              

\usepackage{graphicx}
\usepackage{epsfig} 
\usepackage{mathptmx}
\usepackage{amsmath} 
\usepackage{bm}

\usepackage{amssymb}  
\usepackage{amsthm}
\usepackage{wasysym}
\usepackage[mathcal]{euscript}
\usepackage{bbm}
\usepackage{color}
\usepackage{lipsum}
\usepackage[ruled,vlined,linesnumbered]{algorithm2e}
\usepackage[colorlinks=true,linkcolor=black,anchorcolor=black,citecolor=black,filecolor=black,menucolor=black,runcolor=black,urlcolor=black]{hyperref}
\usepackage{etoolbox}
\usepackage[table,xcdraw,dvipsnames]{xcolor}
\usepackage{multirow}
\usepackage{bbm}

\usepackage{caption}
\usepackage{cleveref}
\Crefname{assum}{Assumption}{Assumptions}
\Crefname{problem}{Problem}{Problems}
\Crefname{thm}{Theorem}{Theorems}
\Crefname{prop}{Proposition}{Propositions}
\Crefname{equation}{}{}

\usepackage{outlines}
\let\labelindent\relax
\usepackage{enumitem}
\setenumerate[2]{label=\alph*.}
\setenumerate[3]{label=\roman*.}

\makeatletter
\patchcmd{\@makecaption}
  {\scshape}
  {}
  {}
  {}
\makeatother

\usepackage[
    style=ieee,
    doi=false,
    isbn=false,
    url=false,
    eprint=false,
    backend=biber,
    natbib=true
    ]{biblatex}
    
\bibliography{references}

\title{\LARGE \bf
\methodname: Training Neural Controllers with \\ Hard, Non-Convex Constraints
}

\author{Long Kiu Chung and Shreyas Kousik
\thanks{All authors are with the Department of Mechanical Engineering, Georgia Institute of Technology, Atlanta, GA.
Corresponding author: \texttt{lchung33@gatech.edu}.
}
}

\begin{document}
\newcommand{\edgar}[1]{{{\textcolor{red}{LKC: #1}}}}
\newcommand{\shrey}[1]{{{\textcolor{red}{SK: #1}}}}

\newtheorem{defn}{Definition}
\newtheorem{rem}[defn]{Remark}
\newtheorem{lem}[defn]{Lemma}
\newtheorem{prop}[defn]{Proposition}
\newtheorem{assum}[defn]{Assumption}
\newtheorem{ex}[defn]{Example}
\newtheorem{thm}[defn]{Theorem}
\newtheorem{cor}[defn]{Corollary}
\newtheorem{problem}[defn]{Problem}

\newcommand{\methodname}{ShardNet\xspace}

\providecommand{\R}{\ensuremath \mathbb{R}}
\newcommand{\N}{\ensuremath \mathbb{N}}

\newcommand{\regtext}[1]{\mathrm{\textnormal{#1}}}
\newcommand{\vc}[1]{\mathbf{#1}}

\newcommand{\st}{\regtext{ s.t. }}

\newcommand{\norm}[1]{\left\Vert#1\right\Vert}
\newcommand{\tp}{^\intercal}
\newcommand{\indicator}{\mathbbm{1}}

\newcommand{\hpoly}{\mathcal{H}}

\newcommand{\ndim}{n}
\newcommand{\ncon}{\ndim_{c}}
\newcommand{\nctrl}{\ndim_{\ctrls}}
\newcommand{\nx}{\ndim_{\xs}}
\newcommand{\ny}{\ndim_{\ys}}
\newcommand{\ndepth}{\ndim_{\depth}}
\newcommand{\npwa}{\ndim_{\regtext{PWA}}}
\newcommand{\npoly}{\ndim_{\ppoly}}
\newcommand{\nFI}{\ndim_{\FIlbl}}
\newcommand{\nsample}{\ndim_{\regtext{sample}}}
\newcommand{\nconvio}{\ndim_{\conlbl}}
\newcommand{\ninvvio}{\ndim_{\invlbl}}

\newcommand{\ppoly}{P}
\newcommand{\ppolyx}{\ppoly_{\xs}}
\newcommand{\ppolyy}{\ppoly_{\ys}}
\newcommand{\ppolyxu}{\ppoly_{\xs\ctrls}}
\newcommand{\Xset}{X}
\newcommand{\Xsetsample}{{\Xset}}
\newcommand{\Vsetsample}{{V}}
\newcommand{\FIset}{\Xset_{\FIlbl}}
\newcommand{\ctrlset}{U}
\newcommand{\paramset}{\Theta}
\newcommand{\classparamset}{\paramset_{\regtext{c}}}
\newcommand{\policyparamset}{\paramset_{\pi}}
\newcommand{\qparamset}{\paramset_{Q}}
\newcommand{\idxset}{I}

\newcommand{\eye}{\vc{I}}
\newcommand{\Acon}{\vc{A}}
\newcommand{\Cpwa}{\vc{C}}
\newcommand{\weight}{\vc{W}}

\newcommand{\zeros}{\vc{0}}
\newcommand{\ones}{\vc{1}}
\newcommand{\bcon}{\vc{b}}
\newcommand{\dpwa}{\vc{d}}
\newcommand{\xv}{\vc{\xs}}
\newcommand{\xnow}{\xv_{\ts}}
\newcommand{\xnext}{\xv_{\ts+1}}
\newcommand{\xsample}{{\xv}}
\newcommand{\yv}{\vc{\ys}}
\newcommand{\uv}{\vc{u}}
\newcommand{\optu}{\uv^{*}}
\newcommand{\exiu}{\uv'}
\newcommand{\unow}{\uv_{\ts}}
\newcommand{\vv}{\vc{v}}
\newcommand{\vsample}{{\vv}}
\newcommand{\vvlb}{\underline{\vv}}
\newcommand{\vvub}{\overline{\vv}}
\newcommand{\zv}{\vc{\zs}}
\newcommand{\bias}{\vc{w}}
\newcommand{\activev}{\vc{a}}
\newcommand{\param}{\vc{\theta}}
\newcommand{\classparam}{\param_{\regtext{c}}}
\newcommand{\policyparam}{\param_{\pi}}
\newcommand{\qparam}{\param_{Q}}
\newcommand{\optctrlparam}{\param_{\ctrls}}

\newcommand{\ts}{t}
\newcommand{\is}{i}
\newcommand{\isopt}{\is^{*}}
\newcommand{\js}{j}
\newcommand{\ks}{k}
\newcommand{\ctrls}{u}
\newcommand{\xs}{x}
\newcommand{\ys}{y}
\newcommand{\zs}{z}
\newcommand{\depth}{d}
\newcommand{\actives}{a}
\newcommand{\bigM}{m}
\newcommand{\dist}{s}
\newcommand{\normdist}{\hat{s}}
\newcommand{\distsample}{{\dist}}
\newcommand{\logscale}{\sigma}
\newcommand{\logmean}{\mu}
\newcommand{\eps}{\epsilon}
\newcommand{\loss}{\ell}

\newcommand{\ffun}{\vc{f}}
\newcommand{\confun}{h}
\newcommand{\pwafun}{\vc{\psi}}
\newcommand{\slicefun}{\regtext{slice}}
\newcommand{\projfun}{\regtext{proj}}
\newcommand{\distfun}{\regtext{dist}}
\newcommand{\normfun}{\regtext{normalize}}
\newcommand{\hardnetcvx}{\regtext{HardNet-Cvx}}
\newcommand{\hardnetnoncvx}{\regtext{\methodname}}
\newcommand{\cardinality}{\#}

\newcommand{\nn}{\vc{\xi}}
\newcommand{\relunn}{\tilde{\xi}}
\newcommand{\basenn}{\nn_{\regtext{b}}}
\newcommand{\classnn}{\nn_{\regtext{c}}}
\newcommand{\policynn}{\nn_{\pi}}
\newcommand{\qnn}{\relunn_{Q}}
\newcommand{\optctrlnn}{\relunn_{\ctrls}}

\newcommand{\FIlbl}{\regtext{FI}}
\newcommand{\conlbl}{\regtext{con}}
\newcommand{\invlbl}{\regtext{inv}}
\newcommand{\distlbl}{\regtext{dist}}
\newcommand{\nonemptylbl}{\neq\emptyset}

\maketitle

\begin{abstract}
While neural network control policies are powerful, their deployment on safety critical systems depends on ensuring that they obey strict constraints.
Existing work often treats safety as a metric to optimize for, which competes with other performance objectives, if training converges at all.
Instead, we introduce \methodname, a neural network architecture that strictly enforces unions of polyhedral constraints \textit{by construction}, using a differentiable projection layer parameterized by a classification network.
The key insight is to embed safety into the neural network's structure, allowing performance to be optimized independently because formal safety guarantees are \textit{always} given.
In contrast with existing neural architectures that can only enforce simple convex constraints, \methodname enables the first safe-by-construction synthesis of forward-invariant neural network controllers on closed-loop systems where safety constraints are expressed as nonconvex unions of polyhedras or learned value function level sets.
To support this, we also introduce a technique to verify and train such value functions \textit{correctly} as rectified linear unit (ReLU) networks, which has not previously been possible.
On double integrator benchmarks drawn from the literature, \methodname policies maintain 100\% safety on verified sets and achieves significantly lower objective loss compared to existing formal methods.
Furthermore, our value function training technique also produces safe sets more than 3 times larger than existing verification approaches.
\end{abstract}
\section{Introduction}

As autonomous dynamical systems are increasingly deployed in everyday tasks, ensuring their safety has become a critical priority.
A common way to certify safety is via the use of control invariant sets, which guarantees \textit{ad infinitum} satisfaction of safety constraints when the system remains within the sets.
However, when neural networks are used to represent control invariant sets \cite{bansal2021deepreach, fisac2019bridging, dawson2023safe, dawson2022learning, dawson2022safe} or learn control policies to stay within these sets (i.e.\ forward invariance) \cite{chung2025provably, harapanahalli2024certified, li2025verifiable}, existing methods either overlook their approximation errors \cite{chung2024goal}, rely on training convergence to balance between safety and performance \cite{chung2025provably, harapanahalli2024certified, li2025verifiable}, or cannot enforce complex, non-convex safety constraints that are common in many systems \cite{min2024hardnet}.

In this work, we present \methodname, a neural network architecture that enables neural controllers to be trained under hard, non-convex safety constraints, while remaining safe throughout the entire training process.
This allows performance objectives to be optimized independently of safety and, to the best of our knowledge, provides the \textit{first} safe-by-construction approach for training neural controllers under unions of input-dependent polyhedral constraints.
We achieve this by combining a classifier with a differentiable projection layer, restricting the output states to lie within safety constraints, specified either explicitly as unions of polyhedras or implicitly as learned value functions.
For the latter case, we additionally provide insights into how they can be learned and verified correctly with rectified linear unit (ReLU) activations, which previous works \cite{li2025verifiable} have failed to accomplish.
Our method does not impose structural assumptions on its base neural controller or on its classifier, and we show it to be effective on piecewise affine (PWA) dynamics and ReLU value functions with 1 hidden layer of width 64.
An overview of \methodname is shown in \Cref{fig:front_figure}.

\begin{figure}[t]
\centering
    \includegraphics[width=1\columnwidth]{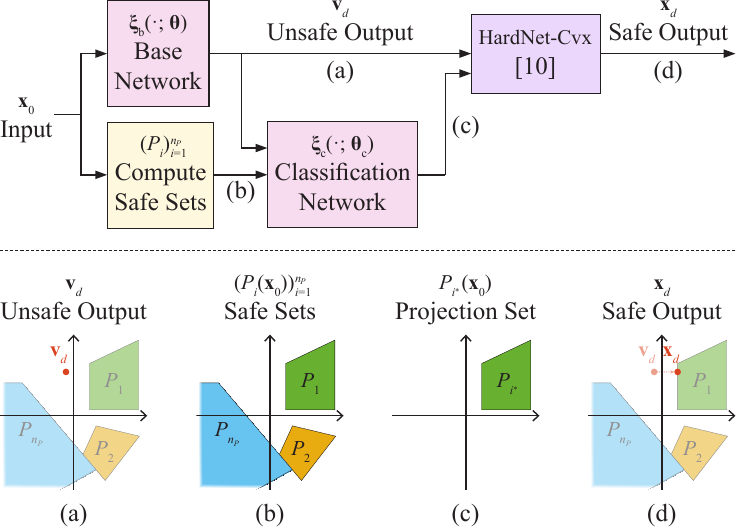}
\caption{
A flowchart of our proposed \methodname method, which restricts the output of a neural network to be within a union of input-dependent polyhedral safe sets \textit{by-construction}.
To accomplish this, we train a classification network $\classnn(\cdot; \classparam)$ to estimate the distances from the violating output $\vv_{\depth}$ of a base network $\basenn(\cdot; \param)$ to each of the safe sets $(\ppoly_\is(\xv_0))_{\is=1}^{\npoly}$, then use HardNet-Cvx \cite{min2024hardnet} to project the output to the closest polyhedral safe set $\ppoly_{\isopt}(\xv_0)$.
}\label{fig:front_figure}
\vspace*{-0.5cm}
\end{figure}

\subsection{Contributions}
\begin{enumerate}
    \item We propose \methodname, a safe-by-construction neural network architecture that enforces unions of input-dependent convex constraints exactly throughout training.
    To the best of our knowledge, this is the \textit{first} neural architecture capable of training under hard non-convex constraints represented as unions of convex sets.
    
    \item We show how \methodname can synthesize forward-invariant neural controllers for safety specifications expressed either as unions of H-polyhedrons or as sub-zero level sets of learned ReLU value functions.
    Compared to neural network repair methods, \methodname guarantees safety independently of training convergence or objective optimization, achieving a 3 times lower objective loss than \cite{chung2025provably}.

    \item We propose a method for training and repairing provably-correct ReLU value functions by augmenting existing sound verification procedures \cite{li2025verifiable} with complete verification checks.
    This significantly reduces unnecessary safe-set shrinkage and yields certified invariant sets 3 times larger than \cite{li2025verifiable}.
\end{enumerate}

\subsection{Paper Organization}
The remainder of the paper is organized as follows: First, we summarize existing works on safe robotic controls and enforcing constraints in neural networks (\Cref{sec:related_work}).
Then, we introduce mathematical concepts used throughout the paper (\Cref{sec:prelim}).
Next, we present the problem and our solution, \methodname, to design a neural network architecture that satisfies a union of convex constraint by-construction (\Cref{sec:hardnet_noncvx}).
We show the utility of \methodname by synthesizing neural control policies for control invariant constraints expressed as unions of H-polyhedrons and sub-zero level sets of ReLU neural networks (\Cref{sec:applications}).
Finally, we demonstrate the validity of our formulation by using \methodname to train forward-invariant neural controllers for double integrators, while comparing our approach's performance against baselines \cite{fisac2019bridging, chung2025provably, li2025verifiable} (\Cref{sec:exp}).
\section{Related Work}\label{sec:related_work}

This work is concerned with designing neural network motion controllers that can enforce control invariance, which is a common strategy for ensuring desiderata such as safety and stability.
To this end, we first review strategies for representing control invariance and learning neural network controllers, then discuss existing works on enforcing safety constraints during neural network training.

\subsection{Control Invariant Set Representations}
Control invariant sets are commonly represented in two ways: intersections of half spaces (H-representation), or sub-zero level sets of value functions.

\subsubsection{Control Invariant Sets from H-Representations}
It is well-known that control invariant sets can be formed by recursively computing the images or preimages of the dynamics \cite{bertsekas1972infinite, blanchini2008set}.
For systems with linear, affine, or PWA dynamics, these sets admit closed-form representations as unions of H or AH-polyhedrons (convex shapes in half space representations and their affine transformations) \cite{anevlavis2019computing, herceg2013multi, vincent2025reachable}.
Then, a forward invariance controller is typically derived via optimization in a model predictive control (MPC) framework \cite{dorea2020robust, houska2024polyhedral}, which can be computationally prohibitive for systems with complex PWA formulations.
To enable fast online inference, recent works have proposed learning forward-invariant neural network controllers \cite{chung2025provably, harapanahalli2024certified}, though successful convergence to a safe solution is not guaranteed.

\subsubsection{Control Invariant Sets from Value Functions}
Alternatively, control-invariant sets can be represented implicitly as the sub-zero level set of a function, most commonly via Hamilton-Jacobi (HJ) reachability analysis or control barrier functions (CBF) (see \cite{choi2021robust, wabersich2023data} for their connections).
Compared to H-representations, these approaches can more naturally encode task constraints (e.g.~collision avoidance for autonomous robots) and extend to other classes of dynamics.
However, traditional HJ methods suffer from the curse of dimensionality \cite{bansal2017hamilton, tomlin2003computational}, while CBF methods imposes restrictive assumptions (such as control-affineness and continuous time) and requires the value function to be handcrafted \cite{ames2019control, ames2014control, ames2016control}.
In response, many methods have resorted to directly \textit{learning} the value functions in HJ \cite{bansal2021deepreach, fisac2019bridging} and CBF \cite{dawson2023safe, dawson2022learning, dawson2022safe} with neural networks.
However, without properly accounting for the approximation errors, learned value functions may forfeit formal safety guarantees \cite{chung2024goal}.
There are a handful of works that have looked into formally repairing (i.e.~training an unsafe neural network until it is safe) and verifying the correctness of learned value functions \cite{li2025verifiable, yang2025scalable}, though their certifications must be associated with a specific control policy, offering no flexibility to find other safe control inputs.

\subsection{Enforcing Constraints in Neural Network Training}

We now review key approaches for enforcing constraints during neural network training.
Our contributions build upon these works.
We first briefly note that a variety of methods exist to repair unsafe controllers with neural networks in the loop \cite{chaudhury2025learning, chung2025provably, harapanahalli2024certified} by extracting training signals from formal verification \cite{chung2021constrained, korda2022stability}, but have no guarantees in being successful, and constraint satisfaction might be at odds with performance metrics.

In contrast, recent work has focused on designing \textit{safe-by-consrtuction} neural network architectures that enforce \textit{hard} constraints such as affineness over polyhedral regions \cite{ataei2025mpolice, balestriero2023police} or outputs being contained within a convex safe set \cite{konstantinov2023new, min2024hardnet}.
These approaches guarantee constraint satisfaction throughout training, allowing performance objectives to be optimized independently.
That said, constraints that commonly arise in dynamics and control are often more complex and non-convex \cite{aceituno2016generalized, fisac2019bridging}: while works such as \cite{lastrucci2025enforce, iftakher2025physics} have attempted to design safe-by-construction networks for general nonlinear constraints, they do so with iterative projections that can only guarantee convergence towards some tolerance near the constraints, so violations can \textit{still} occur.

As such, few \cite{min2024hardnet, bouvier2024policed, wong2026posafenet} have attempted to apply safe-by-construction training methods to problems in control, especially for control invariance.
To the best of our knowledge, the only exceptions are \cite{min2024hardnet, wong2026posafenet}, which trained neural networks to satisfy CBF constraints that are convex on the control inputs assuming control-affine, continuous systems.
Instead, in this paper, we show how \cite{min2024hardnet} can be augmented to \textit{exactly} satisfy a class of non-convex constraints (specifically, a union of polyhedral constraints), which we can express complex forward-invariant conditions in, enabling safe-by-construction control policies in a wider range of problem settings.

\section{Preliminaries}\label{sec:prelim}

This work makes extensive use of H-polyhedrons, neural networks, and PWA functions.
We now introduce our notation and assumptions for each of these concepts.

\subsection{H-Polyhedrons}
An H-polyhedron $\hpoly(\Acon, \bcon) \subseteq \R^\ndim$ is a set of $\ncon \in \N$ half spaces, parameterized by the constraint matrix $\Acon \in \R^{\ncon\times\ndim}$ and the constraint vector $\bcon \in \R^{\ncon}$ as follows \cite{kochenderfer2025algorithms}:
\begin{align}\label{eq:hpoly_def}
    \hpoly(\Acon, \bcon) = \{\zv \mid \Acon \zv \leq \bcon\}.
\end{align}
An H-polyhedron is called an H-polytope if it is bounded.
For a function $\ffun: \R^{\nx} \to \R^{\ny}$, if an $\ny$-dimensional H-polyhedron's parameters ($\Acon$ or $\bcon$) are functions of some input $\xv \in \R^{\nx}$, we denote the H-polyhedron as \textit{input-dependent} \cite{min2024hardnet}.
Furthermore, if $\nexists\, \zv \in \R^{\ndim}$ such that $\Acon\zv\leq\bcon$ (which can be checked with a feasibility linear program (LP) \cite{herceg2013multi}), we denote the H-polyhedron $\hpoly(\Acon, \bcon)$ as empty.

As part of our proposed method, when transforming control-invariant constraints with known states, we reduce certain dimensions in an H-polyhedron to a single value.
This operation is known as \textit{slicing}: per \cite{chung2024goal, herceg2013multi}, we slice the first $\nx < \ndim$ dimensions of an H-polyhedron $\ppoly = \hpoly(\Acon, \bcon) \subseteq \R^{\ndim}$ to a constant $\xv \in \R^{\nx}$ in closed form as:
\begin{align}\label{eq:hpoly_slice}
\begin{split}
    \slicefun(\ppoly_1, \xv) &= \left\{\zv \mid \begin{bmatrix}
        \xv \\
        \zv
    \end{bmatrix}\in\ppoly_1\right\},\\
    &= \hpoly\left((\Acon)_{1:\ncon, (\nx+1):\ndim}, \bcon - (\Acon)_{1:\ncon, 1:\nx}\xv\right).
\end{split}
\end{align}

\subsection{Neural Networks}
A neural network $\nn(\cdot; \param): \R^{\ndim_{0}} \to \R^{\ndepth}$ is a nonlinear function approximator composed from layers of trainable parameters $\param \in \paramset$ and activation functions.
While \methodname does not impose any assumptions on the structure of its base network, we do require some learned components of our method, such as the value function and the optimal control for verification, to be a \textit{ReLU} neural network, which we define below and differentiate with a tilde $\tilde{\cdot}$.
We also briefly summarize the structure of a HardNet-Cvx \cite{min2024hardnet} network, which we primarily built our contributions upon.

\subsubsection{ReLU Neural Networks}
A fully-connected, ReLU-activated feedforward neural network $\relunn(\cdot; \param): \R^{\ndim_{0}} \to \R^{\ndepth}$ returns $\xv_\depth$ given an input $\xv_0$ as:
\begin{subequations}
\begin{align}
    \vv_{\is} &= \weight_{\is}\xv_{\is-1} + \bias_{\is},\label{eq:linear_layer}\\
    \xv_{\is} &= \max\left(\vv_{\is}, \zeros\right),\label{eq:relu_act} \\
    \xv_{\depth} &= \vv_{\depth} = \weight_{\depth}\xv_{\depth-1} + \bias_{\depth},\label{eq:final_layer}
\end{align}
\end{subequations}
where $\is=1,\cdots,\depth-1$ and max is taken elementwise.
We denote $\weight_{1} \in \R^{\ndim_{1}\times \ndim_{0}}, \cdots, \weight_{\depth} \in \R^{\ndim_{\depth}\times \ndim_{\depth - 1}}$ as \textit{weights}, $\bias_{1} \in \R^{\ndim_{1}}, \cdots, \bias_{\depth} \in \R^{\ndim_{\depth}}$ as \textit{biases}, $\depth\in\N$ as the \textit{depth}, $\ndim_{\is}\in\N$ as the \textit{width} of the $\is^{\regtext{th}}$ layer, and layers $1$ to $\depth-1$ as the \textit{hidden layers} of the network.
In this case, the trainable parameters are $\theta = (\weight_1, \cdots, \weight_\depth, \bias_1, \cdots, \bias_\depth)\in\Theta = \R^{\ndim_1\times\ndim_0}\times\cdots\times\R^{\ndim_{\depth}\times \ndim_{\depth - 1}}\times\R^{\ndim_1}\times\cdots\times\R^{\ndepth}$.

It is well known that a ReLU network can be written as a set of mixed-integer linear constraints \cite{tjeng2017evaluating}.
Specifically, \eqref{eq:relu_act} can be written as:
\begin{subequations}\label{eq:relu_to_milp}
\begin{align}
    \xv_{\is} &\geq \vv_{\is},\\
    \xv_{\is} &\leq \vv_{\is} - {\vvlb}_{\is}\tp(1-\activev_{\is}),\\
    \xv_{\is} &\leq {\vvub}_{\is}\tp\activev_{\is},\\
    \xv_{\is} &\geq \zeros\\
    \activev_{\is} &\in \{0, 1\}^{\ndim_{\is}},
\end{align}
\end{subequations}
where the lower bound ${\vvlb}_{\is} \in \R^{\ndim_{\is}}$ and upper bound ${\vvub}_{\is} \in \R^{\ndim_{\is}}$ of $\vv_{\is}$ can be efficiently obtained through verifiers such as CROWN \cite{zhang2018efficient}.
This enables us to embed the constraint $\xv_\depth = \relunn(\xv_0; \param)$ into a mixed-integer linear program (MILP) solver, which we leverage when repairing a learned value function.

\subsubsection{HardNet-Cvx}
Succinctly, HardNet-Cvx \cite{min2024hardnet} adds a differentiable projection layer to a given neural network such that it always obeys a differentiable convex constraint during and after training.
Consider a safe set defined by an input-dependent H-polyhedron that returns $\ppoly(\xv_0) = \hpoly(\Acon(\xv_0), \bcon(\xv_0)) \subseteq \R^{\ndepth}$ for an input $\xv_0 \in \R^{\ndim_0}$ where $\Acon: \R^{\ndim_0} \to \R^{\ncon\times\ndepth}$ and $\bcon: \R^{\ndim_0}\to\R^{\ncon}$ are differentiable, as well as a base neural network $\basenn(\cdot; \param): \R^{\ndim_{0}} \to \R^{\ndepth}$.
Then, HardNet-Cvx returns $\xv_\depth \in \R^{\ndepth}$ as the optimizer of a quadratic program (QP):
\begin{align}\label{eq:hardnet_cvx}
\begin{split}
\xv_\depth &= \hardnetcvx(\xv_0, \ppoly(\xv_0); \param)\\
&= \underset{\xv_\depth}{\arg\min}\{\norm{\xv_\depth - \basenn(\xv_0; \param)}_2 \mid \Acon(\xv_0) \xv_\depth \leq \bcon(\xv_0)\}.
\end{split}
\end{align}
Thus, the output of HardNet-Cvx always lies within the safe set (i.e.\ $\xv_\depth\in\ppoly(\xv_0)$).
Assuming $\ppoly$ is not empty for all possible inputs to the network, \eqref{eq:hardnet_cvx} is differentiable \cite{agrawal2019differentiable, amos2017optnet} and can therefore be incorporated into training.
Note that the HardNet-Cvx structure accepts any differentiable convex constraints; we show only an H-polyhedral constraint in \eqref{eq:hardnet_cvx} for relevance to our method.
Specifically, we extend HardNet-Cvx to the non-convex case when $\ppoly$ is a \textit{union} of input-dependent H-polyhedrons.

\subsection{PWA Functions}
A PWA function $\pwafun: \Xset \to \R^{\ny}$ with output $\yv$ given an input $\xv\in\Xset\subseteq\R^{\nx}$ is parameterized by $\Acon_\is \in \R^{\ncon\times\nx}, \bcon_\is \in \R^{\ncon}, \Cpwa_\is \in \R^{\ny\times\nx}, \dpwa_\is \in \R^{\ny}$ for $\is=1,\cdots,\npwa$ as:
\begin{align}\label{eq:pwa_sys}
    \pwafun(\xv) = \Cpwa_\is \xv + \dpwa_\is\ \forall\  \xv \in \hpoly(\Acon_\is, \bcon_\is),
\end{align}
where we refer to each H-polyhedron $\hpoly(\Acon_\is, \bcon_\is)$ as a PWA region.
For a PWA function to be well-defined, we require that $\Xset=\bigcup_{\is=1}^{\npwa} \hpoly(\Acon_\is, \bcon_\is)$ and that the output is consistent at all intersections of the PWA regions \cite{chung2024goal, chung2025guaranteed}.
In this paper, we denote PWA systems as discrete-time dynamical systems for which the dynamics $\pwafun: \Xset\times\ctrlset \to \R^{\nx}$ is a PWA function.

Similar to ReLU networks, one can express $\ys=\pwafun(\xv)$ as a MILP constraint \cite{marcucci2019mixed} by writing \eqref{eq:pwa_sys} as
\begin{subequations}\label{eq:pwa_to_milp}
\begin{align}
    \Acon_\is \xv &\leq \bcon_\is + \bigM (1-\actives_{\is}),\\
    \yv &\leq \Cpwa_\is \xv + \dpwa + \bigM(1-\actives_{\is}),\\
    \yv &\geq \Cpwa_\is \xv + \dpwa - \bigM(1-\actives_{\is}),\\
    \actives_\is &\in \{0, 1\},\\
    \sum_{\is=1}^{\npwa}\actives_{\is} &= 1,
\end{align}
\end{subequations}
where $\bigM \in \R_+$ is a sufficiently large number.

Consider a union of H-polyhedrons $\ppoly = \bigcup_{\js=1}^{\npoly}\hpoly(\Acon_{\ppoly, \js}, \bcon_{\ppoly, \js})\subseteq\R^{\ny}$.
Its preimage $\pwafun^{-1}(\ppoly)=\{\xv \mid \pwafun(\xv)\in\ppoly\}\subseteq\Xset$ through the PWA function $\pwafun$ is exactly a union of H-polyhedrons \cite{chung2024goal, chung2025guaranteed}:
\begin{align}\label{eq:preimage_pwa}
    \pwafun^{-1}(\ppoly)=\bigcup_{\is=1}^{{\npwa}} \bigcup_{\js=1}^{\npoly} \hpoly\left(\begin{bmatrix}
        \Acon_{\ppoly, \js} \Cpwa_\is\\
        \Acon_\is
    \end{bmatrix},\begin{bmatrix}
        \bcon_{\ppoly, \js} - \Acon_{\ppoly, \js} \dpwa_\is\\
        \bcon_\is
    \end{bmatrix}\right).
\end{align}
In practice, many of the H-polyhedrons in $\pwafun^{-1}(\ppoly)$ are empty and can be omitted for a more memory-efficient representation.

Importantly, ReLU networks can be exactly written as PWA functions \cite{vincent2025reachable, chung2025guaranteed}, meaning that the preimage $\relunn^{-1}(\cdot; \param)$ of a union of H-polyhedrons through a ReLU network has a closed-form expression as another union of H-polyhedrons.
We leverage this property to formulate the forward invariant constraints in our method.
\section{Proposed Method: \methodname}\label{sec:hardnet_noncvx}
\methodname is the core technique that enables our contributions in this paper.
In this section, we first formalize the problem of designing a neural network architecture that obeys a union of H-polyhedral constraints by-construction, then explain how \methodname solves this problem using a differentiable projection layer characterized by a classification network.
An overview of \methodname is shown in \Cref{fig:front_figure}.

\subsection{Problem Setup}\label{sec:hardnet_noncvx_prob}
We wish to design a neural network structure with differentiability and trainable parameters such that it can only return outputs within a safe set, defined as a union of input-dependent H-polyhedrons.
Formally,
\begin{problem}[Hard-Constrained Neural Network with Input-Dependent, Polyhedral Union Constraints]\label{prob:hardnet_noncvx}
Consider an input-dependent safe set represented as a union of H-polyhedrons:
\begin{align}
    \ppoly(\xv_0) = \bigcup_{\is=1}^{\npoly} \ppoly_\is (\xv_0) = \bigcup_{\is=1}^{\npoly}\hpoly(\Acon_{\is}(\xv_0), \bcon_{\is}(\xv_0))\subseteq\R^{\ndepth},   
\end{align}
for an input $\xv_0 \in \Xset_0 \subseteq \R^{\ndim_0}$.
Design a neural network $\nn(\cdot; \param): \Xset_0 \to \R^{\ndepth}$ with learnable parameters $\param\in\paramset$ that satisfies
\begin{align}
    \nn(\xv_0; \param) \in \ppoly(\xv_0)
\end{align}
for all $\xv_0 \in \Xset_0$ and $\param\in\paramset$.
\end{problem}

For a union of input-dependent polyhedrons, it is possible that some of them might be empty for certain inputs.
As such, we assume at least one of them is always non-empty for all possible inputs ($\exists\, \is \in \{1, \cdots, \npoly\}\st\ppoly_\is(\xv_0)\neq\emptyset\, \forall\, \xv_0\in\Xset_0$) so that it is always possible to return a safe output for any input.

\subsection{Proposed Method}
To design a neural network architecture that only returns outputs within a union of H-polyhedrons, we follow the strategy in HardNet-Cvx \cite{min2024hardnet} by first establishing a base network $\basenn(\cdot; \param): \Xset_0 \to \R^{\ndepth}$ with learnable parameters $\param\in\paramset$:
\begin{align}
    \vv_{\depth} = \basenn(\xv_0; \param).
\end{align}
We do not impose a specific structure on the base network but the output $\vv_{\depth}$ may or may not be within the safe set.

To correct $\vv_{\depth}$ to be within the safe set, we need to decide which H-polyhedron in the polyhedral union it should project onto.
Our strategy is to train a classification network $\classnn(\cdot; \classparam): \Xset_0\times\R^{\ndepth} \to \R^{\npoly}$ with parameters $\classparam\in\classparamset$:
\begin{align}
    \begin{bmatrix}
        \dist_1\\
        \vdots\\
        \dist_{\npoly}
    \end{bmatrix} = \classnn\left(\begin{bmatrix}\xv_0\\\vv_{\depth}\end{bmatrix}; \classparam\right),
\end{align}
where $\dist_\is \in \R$ is the predicted distance from $\vv_{\depth}$ to $\ppoly_{\is}(\xv_0)$.
We train this network offline a modified mean squared error (MSE) loss $\loss_{\distlbl} \in \R_+$:
\begin{align}
    \loss_{\distlbl} &=  \frac{\sum_{\js=1}^{\nsample}\sum_{\is\in\idxset_{\nonemptylbl, \js}}\left(\left(\classnn\left(\begin{bmatrix}\xv_{0,\js}\\\vv_{\depth, \js}\end{bmatrix}; \classparam\right)\right)_{\is} - \distsample_{\is, \js}\right)^2}{\sum_{\js=1}^{\nsample}\cardinality\left(\idxset_{\nonemptylbl, \js}\right)},\\
    \idxset_{\nonemptylbl, \js} &= \{\is\in\{1, \cdots, \npoly\}\ |\ \ppoly_\is(\xsample_{0, \js}) \neq \emptyset\},\\
    \distsample_{\is, \js} &= \min\{\norm{\vv_\depth - \vv}_2 \mid \vv \in \ppoly_\is(\xv_0)\},
\end{align}
where $\cardinality(\cdot)$ is the cardinality of a set, $(\begin{bmatrix}
    \xsample_{0, \js}\tp & \vsample_{\depth, \js}\tp
\end{bmatrix}\tp)_{\js=1}^{\nsample}$ are training features uniformly sampled from some domain $\Xsetsample \times \Vsetsample \subset \Xset_0\times\R^{\ndepth}$, $\idxset_{\nonemptylbl, \js} \subseteq \{1, \cdots, \npoly\}$ is the index set of non-empty $\ppoly_\is$, and $\distsample_{\is, \js}$ is the distance between a point and the H-polyhedron $\ppoly_\is$, which can be computed with a QP.

To isolate cases where $\vv_\depth$ is already in a safe set or when an H-polyhedron is empty, we normalize the output of $\classnn(\cdot; \classparam)$ as $\begin{bmatrix}
    \normdist_1 & \cdots & \normdist_{\npoly}
\end{bmatrix}\tp = \normfun(\begin{bmatrix}
    \dist_1 & \cdots & \dist_{\npoly}
\end{bmatrix}\tp)$, defined as
\begin{align}\label{eq:norm_score}
    \normdist_\is = \begin{cases}
         1 & \regtext{if}\ \ppoly_\is(\xv_0) = \emptyset,\\
         0 & \regtext{if}\ \vv_\depth \in \ppoly_\is(\xv_0),\\
         \frac{1}{1+e^{-\logscale(\dist_\is - \logmean)}} & \regtext{otherwise},
    \end{cases}
\end{align}
where $\logscale\in\R_+$ and $\logmean\in\R$ are constants controlling a logistic function such that $0 < \normdist_\is < 1$ if $\ppoly_\is(\xv_0) \neq \emptyset$ and $\vv_\depth \notin \ppoly_\is(\xv_0)$.
This normalization ensures that the predicted distance would always be minimum if $\vv_\depth$ is already in the H-polyhedron, and maximum if the H-polyhedron is empty.
Note that the condition $\ppoly_\is(\xv_0) = \emptyset$ requires solving an LP.
In practice, instead of solving $\npoly$ LPs at inference time, we can precompute the feasible region of each $\ppoly_\is$ via polyhedral projection \cite{herceg2013multi} to reduce the condition to an inequality check.

Finally, we project $\vv_\depth$ to the H-polyhedron with the minimum normalized predicted distance and return it as our final output:
\begin{subequations}\label{eq:hardnet_noncvx}
\begin{align}
    \begin{split}
        \xv_\depth &= \hardnetnoncvx(\xv_0, \ppoly; \param),\\
        &= \hardnetcvx(\xv_0, \ppoly_{\isopt}; \param),\ \regtext{where}
    \end{split}\\
    \isopt &= \underset{\is}{\arg\min}\left\{\left(\normfun\left(\classnn\left(\begin{bmatrix}\xv_0\\\basenn(\xv_0; \param)\end{bmatrix}; \classparam\right)\right)\right)_\is\right\}.\label{eq:opt_hpoly}
\end{align}
\end{subequations}
Thus, $\vv_\depth$ is not altered if it is already in a safe set, and is never projected onto an empty safe set.
Thus, one can think of \methodname as HardNet-Cvx, but with the projection condition controlled by a classification network, thereby preserving its ability to be differentiated and safe-by-construction.

This pipeline, summarized in \Cref{fig:front_figure}, enables the parameters $\param$ in the base network to be trained while ensuring the output is always in the safe set, which is a union of H-polyhedrons. 
We verify that it solves \Cref{prob:hardnet_noncvx}:
\begin{thm}[\methodname Enforces H-Polyhedral Union Constraints]\label{thm:hardnet_noncvx_correct}
    The output $\xv_\depth$ of \methodname is always within the union of H-polyhedrons $\ppoly$ for all $\param\in\paramset$, $\classparam\in\classparamset$, $\logscale \in \R_+$, $\logmean \in \R$, and $\xv_0 \in \Xset_0$.
\end{thm}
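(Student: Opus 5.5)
The plan is to show two things: the selected index $\isopt$ in \eqref{eq:opt_hpoly} always corresponds to a non-empty H-polyhedron, and HardNet-Cvx applied to a non-empty H-polyhedron returns a point inside it. Since $\ppoly_{\isopt}(\xv_0)\subseteq\ppoly(\xv_0)$, membership in the union follows at once. The argument uses only the definition of $\normfun$ in \eqref{eq:norm_score}, the standing assumption that some $\ppoly_\is(\xv_0)$ is non-empty for every $\xv_0\in\Xset_0$, and the feasibility of the QP \eqref{eq:hardnet_cvx}. No property of the trained networks enters, which is why the claim holds for all $\param,\classparam,\logscale,\logmean$.

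\emph{Step 1 (bounds on normalized scores).} Fix any $\xv_0\in\Xset_0$ and parameters, and let $\vv_\depth=\basenn(\xv_0;\param)$. From \eqref{eq:norm_score}:
\begin{itemize}
\item if $\ppoly_\is(\xv_0)=\emptyset$, then $\normdist_\is=1$;
\item if $\ppoly_\is(\xv_0)\neq\emptyset$ and $\vv_\depth\in\ppoly_\is(\xv_0)$, then $\normdist_\is=0$;
\item in the remaining case, $\normdist_\is=1/(1+e^{-\logscale(\dist_\is-\logmean)})$.
\end{itemize}
In the third case the exponential is finite and positive for every real $\dist_\is$, $\logscale>0$ and $\logmean\in\R$, so $0<\normdist_\is<1$ strictly. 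Hence every non-empty index scores strictly less than $1$, and every empty index scores exactly $1$. The ``in'' and ``empty'' cases are mutually exclusive, so the piecewise definition is consistent.

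\emph{Step 2 (selection is non-empty).} By the standing assumption there is an index $\js$ with $\ppoly_\js(\xv_0)\neq\emptyset$. By Step 1, $\normdist_\js<1$. Therefore $\min_\is\normdist_\is<1$, and any minimizer $\isopt$, under whatever tie-breaking rule, satisfies $\normdist_{\isopt}<1$. This rules out $\ppoly_{\isopt}(\xv_0)=\emptyset$.

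\emph{Step 3 (projection lands in the set).} The QP \eqref{eq:hardnet_cvx} with constraint set $\ppoly_{\isopt}(\xv_0)$ has a feasible set that is non-empty, closed and convex. Its objective, the distance to $\vv_\depth$, is continuous and coercive, so a minimizer exists; it is in fact unique, being a Euclidean projection onto a closed convex set. By construction this minimizer is feasible, so $\xv_\depth\in\ppoly_{\isopt}(\xv_0)\subseteq\ppoly(\xv_0)$. As a remark, if $\vv_\depth$ already lies in some $\ppoly_\is(\xv_0)$, a zero score is attained and the projection returns $\vv_\depth$ unchanged, which matches the text.

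The main point needing care is Step 1: the case analysis in \eqref{eq:norm_score} has to give a \emph{strict} separation between empty and non-empty indices, uniformly in $\logscale$, $\logmean$ and the (untrusted) classifier output $\dist_\is$. Strictness of the logistic bound $<1$ for finite arguments settles it. Existence of the projection in Step 3 is standard once non-emptiness is established.
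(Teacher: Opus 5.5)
Your proposal is correct and follows the paper's own argument: at least one polyhedron is non-empty, so it scores strictly below $1$ while every empty one scores exactly $1$, and the $\arg\min$ therefore selects a non-empty polyhedron onto which HardNet-Cvx projects. Your Step 3 (existence of the Euclidean projection onto a non-empty closed convex set) only makes explicit what the paper's proof leaves implicit.
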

\begin{proof}
    From \Cref{sec:hardnet_noncvx_prob}, we know that at least one H-polyhedron is non-empty for any $\xv_0 \in \Xset_0$.
    From \eqref{eq:norm_score}, the normalized predicted distances for all non-empty H-polyhedrons must be strictly less than $1$.
    As such, $\ppoly_{\isopt}$ from \eqref{eq:hardnet_noncvx} always corresponds to a non-empty H-polyhedron, on which HardNet-Cvx would project to.
\end{proof}
\begin{rem}
    An important consequence of \Cref{thm:hardnet_noncvx_correct} is that \methodname still holds its safety guarantees \textit{regardless} of how poorly $\classnn(\cdot; \classparam)$ is trained.
    Though a poorly trained $\classnn(\cdot; \classparam)$ and non-smooth operations such as $\arg\min$ from \eqref{eq:opt_hpoly} may affect the performance of the overall network, we find such cases to be rare in practice.
    We leave a more thorough investigation of their effects to future work.
\end{rem}
\section{Applications}\label{sec:applications}
While solving \Cref{prob:hardnet_noncvx} generally enables polyhedral union constraint satisfaction on neural networks, it is not immediately obvious, however, how the solution can be applied to synthesize safe control policies: a neural control policy with \methodname architecture imposes constraints on the \textit{controller} output, whereas control invariance constraints are most naturally placed on the output of the \textit{dynamics}.
In this section, we demonstrate the utility of \methodname by showing how control invariance constraints, expressed as a union of H-polyhedrons (\Cref{sec:fi_hpolyunion}) or the sub-zero level set of a ReLU network (\Cref{sec:fi_valuefn}), can be formulated into \Cref{prob:hardnet_noncvx} by computing the preimage of PWA systems.

\subsection{Forward-Invariance from Union of H-Polyhedrons}\label{sec:fi_hpolyunion}
We first discuss how to synthesize forward-invariant neural policies with \methodname for a union of H-polyhedrons, which is a common specification for systems with linear, affine, or PWA dynamics with explicit safety constraints such as actuator limits or state bounds.

\subsubsection{Problem Setup}
For a discrete-time PWA dynamics system, given a union of H-polyhedrons, we wish to design neural network control policies such that the next state will be within the polyhedral union, assuming it can be done.
Formally,
\begin{problem}[Learning Forward-Invariant Controller for Union of H-Polyhedrons]\label{prob:fi_polyunion}
    For a discrete-time PWA system with dynamics $\pwafun: \Xset\times\ctrlset \to \R^{\nx}$,
    where $\Xset\times\ctrlset\subseteq\R^{\nx}\times\R^{\nctrl}$ is a union of H-polyhedrons,
    render the set
    \begin{align}
        \FIset = \bigcup_{\is=1}^{\nFI}\Xset_{\FIlbl, \is}=\bigcup_{\is=1}^{\nFI}\hpoly(\Acon_{\FIlbl, \is}, \bcon_{\FIlbl, \is})\subseteq\Xset
    \end{align}
    forward-invariant by designing a neural network control policy $\policynn(\cdot; \policyparam): \Xset \to \ctrlset$ with learnable parameters $\policyparam\in\policyparamset$.
    That is, ensure that
    \begin{align}
        \pwafun\left(\begin{bmatrix}
            \xnow \\
            \policynn(\xnow; \policyparam)
        \end{bmatrix}\right)\in\FIset,
    \end{align}
    for all $\policyparam\in\policyparamset$ and $\xnow\in\Xset$ where $\exists \unow \in \ctrlset$ such that $\pwafun(\begin{bmatrix}
        \xnow\tp & \unow\tp
    \end{bmatrix}\tp)\in\FIset$.
\end{problem}
If $\FIset$ can indeed be rendered forward-invariant, then the control policy from solving \Cref{prob:fi_polyunion} would maintain the states within $\FIset$ for all future timesteps:
\begin{assum}[$\FIset$ Is A Control-Invariant Set]\label{assum:fi_set}
    We assume that for all $\xnow \in \FIset$, $\exists\, \unow \in \ctrlset$ such that $\pwafun(\begin{bmatrix}
        \xnow\tp & \unow\tp
    \end{bmatrix}\tp)\in\FIset$.
\end{assum}
\begin{rem}
\Cref{assum:fi_set} is non-trivial to fulfill, which is why we and other learned forward-invariant control policies \cite{harapanahalli2024certified, chung2025provably} are limited to simple dynamical systems in practice, even though \Cref{prob:fi_polyunion} theoretically extends to all general PWA systems, including those learned as ReLU or convolutional neural networks (CNN).
This motivates us to develop forward-invariant policies for value functions in \Cref{sec:fi_valuefn}, which slightly alleviates the assumption by enabling the invariant sets to be learned as neural networks.
We leave directly reducing the difficulty of \Cref{assum:fi_set} to future work.
\end{rem}

\subsubsection{Proposed Method}

The difficulty of applying \methodname to \Cref{prob:fi_polyunion} lies in defining the H-polyhedral union constraints $\ppoly$ at the output of the control policy $\policynn(\cdot; \policyparam)$, which is separated from the given constraint set $\FIset$ by a PWA system $\pwafun$.
Na\"ively, one might attempt to set $\ppoly(\xnow)=\ppolyxu$, where
\begin{align}
    \ppolyxu = \bigcup_{\is=1}^{\npoly} \ppoly_{\xs\ctrls, \is} = \pwafun^{-1}(\FIset),
\end{align}
which is a union of H-polyhedrons from \eqref{eq:preimage_pwa}.
However, this is incorrect, as the output of $\policynn(\cdot; \policyparam)$ is $\unow\in\ctrlset$, whereas $\ppolyxu\subseteq\Xset\times\ctrlset$.

That said, since we know the current state $\xnow$ at inference time, we can reduce $\ppolyxu$ to $\ctrlset$ by expressing it as a function of $\xnow$ via slicing:
\begin{align}\label{eq:con_fi_polyunion}
    \ppoly(\xnow) = \slicefun(\ppolyxu, \xnow) = \bigcup_{\is=1}^{\npoly} \slicefun(\ppoly_{\xs\ctrls, \is}, \xnow),
\end{align}
which is still a union of H-polyhedrons according to \eqref{eq:hpoly_slice}.

We show that \Cref{prob:fi_polyunion} can be solved by applying \methodname to \eqref{eq:con_fi_polyunion}:
\begin{thm}[\methodname Control Policy for Union of H-Polyhedrons]\label{thm:ctrl_fi_polyunion}
    The control policy
    \begin{align}\label{eq:opt_ctrl_fi_polyunion}
        \unow = \hardnetnoncvx\left(\xnow, \slicefun\left(\pwafun^{-1}(\FIset), \xnow\right); \policyparam\right)
    \end{align}
    satisfies $\pwafun(\begin{bmatrix}
        \xnow\tp & \unow\tp
    \end{bmatrix}\tp) \in \FIset$ for all learnable parameters $\policyparam\in\policyparamset$ and all $\xnow\in\Xset$ where $\exists\  \uv \in \ctrlset$ such that $\pwafun(\begin{bmatrix}
        \xnow\tp & \uv\tp
    \end{bmatrix}\tp)\in\FIset$.
\end{thm}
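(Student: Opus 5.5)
The proof reduces \Cref{thm:ctrl_fi_polyunion} to \Cref{thm:hardnet_noncvx_correct}. The plan is to show that, for each admissible state $\xnow$, the sliced set $\ppoly(\xnow)=\slicefun(\pwafun^{-1}(\FIset),\xnow)$ satisfies the standing non-emptiness assumption of \Cref{sec:hardnet_noncvx_prob}. Then \methodname's output lies in it, and finally membership in the slice is translated back into membership of the next state in $\FIset$.

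\textbf{Step 1: a useful set identity.} By the definition of the preimage and of slicing in \eqref{eq:hpoly_slice}, and since slicing distributes over unions as in \eqref{eq:con_fi_polyunion}, we have
\begin{align*}
\slicefun(\pwafun^{-1}(\FIset),\xnow)=\left\{\uv \,\middle|\, \pwafun\left(\begin{bmatrix}\xnow\tp & \uv\tp\end{bmatrix}\tp\right)\in\FIset\right\}.
\end{align*}
By \eqref{eq:preimage_pwa}, $\pwafun^{-1}(\FIset)$ is a finite union of H-polyhedrons $\ppoly_{\xs\ctrls,\is}$. Each slice $\slicefun(\ppoly_{\xs\ctrls,\is},\xnow)$ is an H-polyhedron whose constraint vector is affine in $\xnow$, so it is input-dependent and differentiable. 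This places $\ppoly(\xnow)$ exactly in the form required by \Cref{prob:hardnet_noncvx}, with $\xv_0=\xnow$.

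\textbf{Step 2: non-emptiness.} The hypothesis of the theorem gives some $\uv\in\ctrlset$ with $\pwafun([\xnow\tp\ \uv\tp]\tp)\in\FIset$. By Step 1, $\uv\in\ppoly(\xnow)$, so at least one component slice $\slicefun(\ppoly_{\xs\ctrls,\is},\xnow)$ is non-empty.

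This is the main obstacle. The assumption in \Cref{sec:hardnet_noncvx_prob} asks for a single index $\is$ that is non-empty for all inputs. However, the proof of \Cref{thm:hardnet_noncvx_correct} only uses, pointwise in $\xv_0$, that some polyhedron is non-empty: then the normalized score of a non-empty polyhedron is strictly below $1$, while empty ones score exactly $1$, so $\isopt$ selects a non-empty polyhedron. I would therefore restrict $\Xset_0$ to the set of admissible states, namely those $\xnow$ for which a safe control exists, and rerun the argument of \Cref{thm:hardnet_noncvx_correct} pointwise.

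\textbf{Step 3: conclusion.} HardNet-Cvx then projects $\policynn$'s base output onto the non-empty H-polyhedron $\slicefun(\ppoly_{\xs\ctrls,\isopt},\xnow)$, so $\unow\in\ppoly(\xnow)$. By the identity in Step 1, $\pwafun([\xnow\tp\ \unow\tp]\tp)\in\FIset$. This holds for every $\policyparam\in\policyparamset$, since nothing in the argument depends on the parameters.

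I would also remark that $\unow\in\ctrlset$ is implicit here: the preimage is taken over the domain $\Xset\times\ctrlset$ of $\pwafun$, as per \eqref{eq:preimage_pwa}, whose PWA regions constrain $\uv$ to lie in $\ctrlset$.
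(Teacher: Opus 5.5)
Your proposal is correct and follows the paper's proof. Like the paper, you derive $\slicefun(\pwafun^{-1}(\FIset),\xnow)=\{\uv\mid\pwafun([\xnow\tp\ \uv\tp]\tp)\in\FIset\}$ from the definitions of preimage and slicing, invoke \Cref{thm:hardnet_noncvx_correct} at each state where this set is non-empty, and read membership back into $\FIset$. Your remark that only pointwise non-emptiness is needed, rather than the uniform assumption of \Cref{sec:hardnet_noncvx_prob}, spells out what the paper uses implicitly when it applies \Cref{thm:hardnet_noncvx_correct} ``for all $\xnow$ where $\ppoly$ is non-empty.''
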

\begin{proof}
    By the definition of preimage, $\pwafun^{-1}(\FIset)=\{\begin{bmatrix}
        \xv\tp & \uv\tp
    \end{bmatrix}\tp \mid \pwafun(\begin{bmatrix}
        \xv\tp & \uv\tp
    \end{bmatrix}\tp)\in\FIset\}$.
    Thus, by the definition of slicing in \eqref{eq:hpoly_slice}, we have $\ppoly = \slicefun(\pwafun^{-1}(\FIset), \xnow)=\{\uv\mid\pwafun(\begin{bmatrix}
        \xnow\tp & \uv\tp
    \end{bmatrix}\tp)\in\FIset\}$.
    Since, from \Cref{thm:hardnet_noncvx_correct}, we have $ \unow \in \ppoly$ for all $\policyparam\in\policyparamset$ and $\xnow$ where $\ppoly$ is non-empty, and that the slicing and preimage operations are exact, \Cref{thm:ctrl_fi_polyunion} must be true.
\end{proof}
Since \methodname is guaranteed to return a forward-invariant control for $\xnow$ if it exists, the neural control policy can be optimized while always satisfying safety constraints, which are enforced independently of the performance objective.
Furthermore, if the system starts at a state $\xnow$ where $\nexists\,\unow$ to stay in $\FIset$, this dangerous condition can be quickly verified by checking if $\begin{bmatrix}
    \normdist_1 & \cdots & \normdist_{\npoly}
\end{bmatrix}\tp = \ones$ (whether all safe sets for $\xnow$ is empty), which contrasts sharply with \textit{ad hoc} repair methods that cannot verify if the problem is solvable in the first place \cite{chung2025provably, harapanahalli2024certified}.

\subsection{Forward-Invariance from Learned Value Functions}\label{sec:fi_valuefn}

When a system's dynamics and constraint specification are more complex, such as controlling a quadruped to stand up \cite{fisac2019bridging} or navigating a drone through obstacles \cite{dawson2022safe}, it is typical to represent the control invariant set as the sub-zero level set of a value function.
Here, we discuss how to learn these value functions correctly as a ReLU network, and how to synthesize safe neural polices for them using \methodname.

\subsubsection{Problem Setup}
Given a control invariant set expressed as the sub-zero level set of a ReLU neural network value function, we wish to design neural network control policies such that the next state will remain in the sub-zero level set of the value function.
Formally,
\begin{problem}[Learning Forward-Invariant Controller for Value Functions]\label{prob:fi_qnet}
    Given a ReLU value function $\qnn(\cdot; \qparam): \Xset\times\ctrlset \to \R$, design a neural network policy $\policynn(\cdot; \policyparam): \Xset \to \ctrlset$ with learnable parameters $\policyparam\in\policyparamset$ that are differentiable and satisfy
    \begin{align}
        \qnn\left(\begin{bmatrix}
            \xnow \\ \policynn(\xnow; \policyparam)
        \end{bmatrix}; \qparam\right)\leq 0,
    \end{align}
    for all $\param\in\paramset$ and $\xnow \in \Xset \subset \R^{\nx}$ where $\exists \unow \in \ctrlset \subset \R^{\nctrl}$ such that $\qnn(\begin{bmatrix}
        \xnow\tp & \unow\tp
    \end{bmatrix}\tp; \qparam)\leq0$.
    Assume $\Xset\times\ctrlset$ is a union of H-polytopes.
\end{problem}
This specific form of value function $\qnn(\cdot; \qparam)$ is known as a Q-network \cite{li2025verifiable} or a state-action CBF (SACBF) \cite{he2023state}; we will henceforth refer to it as a Q-network.
Q-networks share close ties HJ value functions \cite{fisac2019bridging} and CBFs \cite{he2023state}, and can be approximated as a neural network using deep reinforcement learning (RL) methods such as soft actor-critic (SAC) \cite{fisac2019bridging, haarnoja2018soft}.

Similar to \Cref{assum:fi_set}, we assume that the given Q-network can actually represent a control-invariant set.
In this case, for a discrete-time deterministic, bounded, and Lipschitz continuous dynamical system and some user-defined constraint value function, the Q-network must fulfill \textit{constraint satisfaction} and \textit{forward invariance} to be considered safe \cite{li2025verifiable}. 
Mathematically,
\begin{assum}[Sublevel Set of Q-Network Denotes Safety]\label{assum:safe_qnet}
    Let the system dynamics $\ffun: \Xset\times\ctrlset\to\Xset$ be discrete-time, deterministic, bounded, and Lipschitz continuous:
    \begin{align}
        \xnext = \ffun\left(\begin{bmatrix}
            \xnow \\
            \unow
        \end{bmatrix}\right).
    \end{align}
    Consider a user-defined constraint function $\confun: \R^{\nx} \to \R$, where constraints are considered violated at the current timestep $\ts$ iff $\confun(\xnow) > 0$.
    We assume the sub-zero level set of a Q-network $\{\xv \mid \qnn(\begin{bmatrix}
            \xv\tp & \uv\tp
        \end{bmatrix}\tp; \qparam) \leq 0\}$ is a safe invariant set \cite{li2025verifiable}, meaning that $\qnn(\cdot; \qparam)$ satisfies:
    \begin{enumerate}
        \item (Constraint Satisfaction).
        If $\qnn(\begin{bmatrix}
            \xv\tp & \uv\tp
        \end{bmatrix}\tp; \qparam) \leq 0$, then
        \begin{align}\label{eq:con_qnet}
            \confun(\xv)\leq0.
        \end{align}
        \item (Forward Invariance).
        If $\qnn(\begin{bmatrix}
            \xv\tp & \uv\tp
        \end{bmatrix}\tp; \qparam) \leq 0$, then there exists $\exiu\in\ctrlset$ such that 
        \begin{align}\label{eq:fi_qnet}
            \qnn\left(\begin{bmatrix}\ffun\left(\begin{bmatrix}\xv \\ \uv\end{bmatrix}\right) \\ {\exiu}\end{bmatrix}; \qparam\right)\leq0.
        \end{align}
    \end{enumerate}
\end{assum}
\noindent Then, the control policy $\policynn(\cdot; \policyparam)$ from solving \Cref{prob:fi_qnet} would ensure that any states that starts within the sub-zero level set of the Q-network would satisfy the constraint $\confun(\xnow) \leq 0$ for all future timesteps $\ts$.

Unfortunately, \Cref{assum:safe_qnet} is not trivial to satisfy either.
Existing methods that train Q-networks do not bother to formally verify \eqref{eq:con_qnet} and \eqref{eq:fi_qnet}, and offer no solution to fix them even if violations are found \cite{fisac2019bridging, he2023state}.
The only relevant work is \cite{li2025verifiable}, though they reported having failed to train a safe ReLU Q-network with a non-empty sub-zero level set even for a double integrator system.
Instead, they have to propose a new network structure (\textit{multiplicative Q-network}) where the preimage can no longer be computed by \eqref{eq:preimage_pwa} and the complexity of the verification becomes a more complex mixed-integer quadratically constrained program (MIQCP) in place of a MILP.

This motivates us to offer preliminary insights on how a safe, non-trivial Q-network can be trained with a ReLU network to enable provable forward-invariance on systems using level sets and \methodname.
Particularly, given a ReLU Q-network, we wish to verify whether it satisfies conditions \eqref{eq:con_qnet} and \eqref{eq:fi_qnet}.
If it does not, we wish to extract violation signal from the verification and repair the network until it is safe.
Formally,
\begin{problem}[Verification and Repair of Q-Network]\label{prob:train_qnet}
    Given a ReLU Q-network $\qnn(\cdot; \qparam): \Xset\times\ctrlset \to \R$, system dynamics $\ffun: \Xset\times\ctrlset\to\Xset$, and constraint value function $\confun: \R^{\nx} \to \R$, we wish to find learnable parameters $\qparam=(\weight_1, \cdots, \weight_\depth, \bias_1, \cdots, \bias_\depth)$ such that $\qnn(\cdot; \qparam)$ satisfies \eqref{eq:con_qnet} and \eqref{eq:fi_qnet}, assuming $\ffun$ and $\confun$ are PWA functions.
    In addition, we wish to maximize the size of its sub-zero level set $\{\xv \mid \qnn(\begin{bmatrix}
            \xv\tp & \uv\tp
        \end{bmatrix}\tp; \qparam) \leq 0\}$.
\end{problem}
\noindent Maximizing the sub-zero level set ensures a large safe invariant set and discourages the trivial solution of the Q-network being positive everywhere.

\begin{rem}
Unlike \Cref{prob:fi_polyunion} and \Cref{prob:fi_qnet}, we are not aiming to design a network that satisfies \eqref{eq:con_qnet} and \eqref{eq:fi_qnet} by-construction in \Cref{prob:train_qnet}, but are instead modifying the parameters of an existing network structure.
This is because \eqref{eq:fi_qnet} incurs a recursive condition that is non-trivial to solve with \methodname or any existing hard-constrained neural network architectures; so, we leave this to future work.
\end{rem}

\subsubsection{Safe Controller Synthesis}
We first solve \Cref{prob:fi_qnet} given \Cref{assum:safe_qnet} (i.e.~assuming \Cref{prob:train_qnet} has been solved).
Surprisingly, we find that \Cref{prob:fi_qnet} can be solved in a very similar way to \Cref{prob:fi_polyunion} using \methodname due to ReLU networks being PWA functions.
The key is to compute the preimage of the negative half space instead of $\FIset$:
\begin{thm}[\methodname Control Policy for Sublevel Sets]\label{thm:ctrl_qnet}
    The control policy
    \begin{align}\label{eq:opt_ctrl_qnet}
        \unow = \hardnetnoncvx\left(\xnow, \slicefun\left(\qnn^{-1}(\hpoly(1, 0); \qparam), \xnow\right); \policyparam\right),
    \end{align}
    satisfies $\qnn(\begin{bmatrix}
        \xnow\tp & \unow\tp
    \end{bmatrix}\tp; \qparam)\leq0$ for all $\policyparam\in\policyparamset$ in \eqref{eq:opt_ctrl_qnet} and all $\xnow \in \Xset$ where $\exists\  \uv \in \ctrlset$ such that $\qnn(\begin{bmatrix}
        \xnow\tp & \uv\tp
    \end{bmatrix}\tp; \qparam)\leq0$.
\end{thm}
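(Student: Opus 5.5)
The plan is to reduce \Cref{thm:ctrl_qnet} to \Cref{thm:ctrl_fi_polyunion}, reading the ReLU Q-network as the PWA map and the negative half-line as the target set. First, $\hpoly(1,0)=\{\zs\in\R \mid \zs\leq 0\}$ is a one-dimensional H-polyhedron, so it is trivially a union of H-polyhedrons with $\npoly=1$. Second, since $\Xset\times\ctrlset$ is a union of H-polytopes (as assumed in \Cref{prob:fi_qnet}), $\qnn(\cdot;\qparam)$ restricted to this domain is a PWA function in the sense of \eqref{eq:pwa_sys}, as noted at the end of the preliminaries. Its regions are the activation patterns $\activev$ intersected with the domain polytopes, and on each region it equals an affine map $\Cpwa_\is[\xv\tp\ \uv\tp]\tp+\dpwa_\is$.

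With this, \eqref{eq:preimage_pwa} gives $\qnn^{-1}(\hpoly(1,0);\qparam)$ exactly as a union of H-polyhedrons in $\Xset\times\ctrlset$. By the definition of preimage, it equals $\{[\xv\tp\ \uv\tp]\tp \mid \qnn([\xv\tp\ \uv\tp]\tp;\qparam)\leq 0\}$. Slicing at $\xnow$ via \eqref{eq:hpoly_slice}, applied componentwise to the union, then yields
\begin{align*}
\ppoly(\xnow)=\{\uv \mid \qnn([\xnow\tp\ \uv\tp]\tp;\qparam)\leq 0\},
\end{align*}
which is again a union of input-dependent H-polyhedrons. This matches the setting of \Cref{prob:hardnet_noncvx} with $\xv_0=\xnow$.

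Next, I restrict attention to states $\xnow$ for which some $\uv\in\ctrlset$ gives $\qnn\leq 0$. For such a state $\ppoly(\xnow)\neq\emptyset$, so at least one slice $\slicefun(\ppoly_{\xs\ctrls,\is},\xnow)$ is non-empty. The argument of \Cref{thm:hardnet_noncvx_correct} then applies: nonempty pieces receive normalized scores strictly below $1$, so $\isopt$ selects a nonempty piece. HardNet-Cvx projects onto that piece, so $\unow\in\ppoly(\xnow)$ for every $\policyparam$, every classifier, and any logistic constants. Unwinding the characterization of $\ppoly(\xnow)$ above gives $\qnn([\xnow\tp\ \unow\tp]\tp;\qparam)\leq 0$.

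The only delicate step is that the non-emptiness hypothesis of \Cref{prob:hardnet_noncvx} is assumed uniformly over all inputs, whereas here it holds only pointwise on the relevant states. I would handle this exactly as in the proof of \Cref{thm:ctrl_fi_polyunion}: the argument of \Cref{thm:hardnet_noncvx_correct} only uses non-emptiness at the particular input being evaluated, so pointwise non-emptiness is enough.

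A second, minor point is that ReLU regions overlap on boundaries. This does not affect correctness, because the network output is consistent on those intersections, so the preimage union is still exact.
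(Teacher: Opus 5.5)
Your proposal is correct and follows the paper's route. The paper also notes that $\hpoly(1,0)$ is the nonpositive half-line and reruns the proof of \Cref{thm:ctrl_fi_polyunion} (exact PWA preimage, exact slicing, then \Cref{thm:hardnet_noncvx_correct}) with $\FIset$ replaced by $\hpoly(1,0)$; your added remark that only non-emptiness at the particular $\xnow$ is needed makes explicit a step the paper leaves implicit.
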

\begin{proof}
    The H-polyhedron $\hpoly(1, 0)$ represents the set of all non-positive real numbers $\{\zs\mid\zs\leq0\}$ by the definition in \eqref{eq:hpoly_def}.
    The rest of the proof follows that of \Cref{thm:ctrl_fi_polyunion} except with $\FIset$ replaced with $\hpoly(1, 0)$.
\end{proof}

Unlike \cite{fisac2019bridging, li2025verifiable, yang2025scalable} where the learned Q-network must be tied with a specific control policy, \Cref{thm:ctrl_qnet} allows the control policy to be freely altered with additional objectives while still always satisfying constraints by being within the sub-zero level set of the Q-network.

However, the control policy \eqref{eq:opt_ctrl_qnet} relies heavily on \Cref{assum:safe_qnet}.
This is because the policy projects constraint-violating outputs of the base network to the boundary of the sub-zero level set of the value function, which is where violations of \eqref{eq:con_qnet} and \eqref{eq:fi_qnet} are most likely to occur \cite{li2025verifiable, yang2025scalable}.
In practice, one must either project further away from the boundary of the sub-zero level set \cite{he2023state} (e.g.\ use $\hpoly(1, -\eps)$ instead of $\hpoly(1, 0)$ in \eqref{eq:opt_ctrl_qnet} for some small number $\eps \in \R_+$), or find a way to verify and repair a ReLU Q-network, which we address next.

\subsubsection{Training a Safe Q-Network}

Our pipeline for training a safe Q-network is summarized in \Cref{fig:pseudocode}.
It follows closely to the method proposed in \cite{li2025verifiable}, with the difference in adding complete verification conditions to check the validity of the counterexamples found by the sound verification techniques in \cite{li2025verifiable}.
Our key insight is that, if a safe sample is mistakenly identified as a counterexample through a sound verification method, we should alter the verification method instead of the Q-network, as penalizing a Q-network when unnecessary would result in over-shrinkage of the level set (especially if it is a ReLU network) \cite{li2025verifiable}.

\begin{figure}[t]
\centering
    \includegraphics[width=0.8\columnwidth]{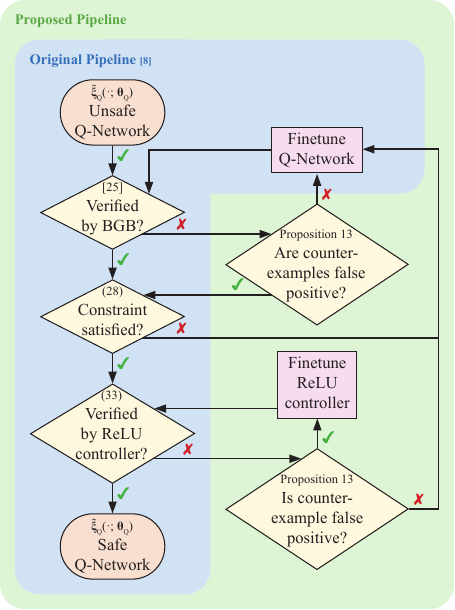}
\caption{
Flowchart for our proposed method (green) of repairing an unsafe Q-network.
The original pipeline proposed by \cite{li2025verifiable} (blue) consists of sound verification methods, which penalizes the Q-network with false positives and causes unnecessary level-set shrinkage.
We augment their method by adding complete verification techniques to double-check whether the identified counterexamples are legitimate.
}\label{fig:pseudocode}
\vspace*{-0.5cm}
\end{figure}

To start, we attempt to find multiple points that violate conditions \eqref{eq:con_qnet} and \eqref{eq:fi_qnet} with boundary-guided backtracking (BGB) \cite{yang2025scalable}.
Succinctly, we collect samples from within the sub-zero level set of the Q-network, then update them by following the gradient to maximize a violation loss of either \eqref{eq:con_qnet} or \eqref{eq:fi_qnet}.
Refer to \cite{yang2025scalable} for details.
After some gradient steps, we retain the samples that violate \eqref{eq:con_qnet} or \eqref{eq:fi_qnet}.
Per \cite{li2025verifiable}, a sample $(\xv_{\conlbl, \is}, \uv_{\conlbl, \is}) \in \{(\xv_{\conlbl, 1}, \uv_{\conlbl, 1}),\cdots,(\xv_{\conlbl, \nconvio}, \uv_{\conlbl, \nconvio})\}$ is \textit{constraint-violating} (violation of \eqref{eq:con_qnet}) if
\begin{subequations}\label{eq:con_violate_check}
\begin{align}
    \qnn\left(\begin{bmatrix}
        \xv_{\conlbl, \is} \\ \uv_{\conlbl, \is}
    \end{bmatrix}; \qparam\right) &\leq 0,\ \regtext{and}\\
    \confun(\xv_{\conlbl, \is}) &> 0,
\end{align}
\end{subequations}
which involves inequality checks.
Similarly, a sample $(\xv_{\invlbl, \is}, \uv_{\invlbl, \is}) \in \{(\xv_{\invlbl, 1}, \uv_{\invlbl, 1}),\cdots, (\xv_{\invlbl, \ninvvio}, \uv_{\invlbl, \ninvvio})\}$ is \textit{invariance-violating} (violation of \eqref{eq:fi_qnet}) if
\begin{subequations} \label{eq:inv_violate_check}
\begin{align}
    \qnn\left(\begin{bmatrix}
        \xv_{\invlbl, \is} \\ \uv_{\invlbl, \is}
    \end{bmatrix}; \qparam\right) &\leq 0, \regtext{and}\label{eq:qnet_says_safe}\\
    \min_{\exiu\in\ctrlset}\left\{\qnn\left(
    \begin{bmatrix}
    \ffun\left(\begin{bmatrix}
        \xv_{\invlbl, \is} \\ \uv_{\invlbl, \is}
    \end{bmatrix}\right)\\\exiu
    \end{bmatrix}; \qparam\right)\right\}&>0.\label{eq:exist_ctrl_check}
\end{align}
\end{subequations}

To check \eqref{eq:exist_ctrl_check}, \cite{li2025verifiable} uses a sound interval arithmetic technique, which means if a solution cannot be found, $(\xv_{\invlbl, \is}, \uv_{\invlbl, \is})$ is \textit{not} an invariance-violating sample; otherwise, the verification is inconclusive. Despite this, \cite{li2025verifiable} would treat potentially false positive samples as if they were true counterexamples, altering $\qparam$ when it might not be needed.
We fix this issue by instead formulating \eqref{eq:exist_ctrl_check} exactly into a MILP check:
\begin{prop}[Forward Invariance MILP Check]\label{prop:inv_check}
Consider some $\xv \in \Xset$ and $\uv \in \ctrlset$.
$(\xv, \uv)$ is an invariance-violating sample iff $\qnn(\begin{bmatrix}
            \xv\tp & \uv\tp
        \end{bmatrix}\tp; \qparam) \leq 0$ and $\optu(\xv, \uv) > 0$, where $\optu:\Xset\times\ctrlset\to\ctrlset$ is the optimal solution of the MILP:
        \begin{align}\label{eq:inv_check}
    \optu(\xv, \uv) = \underset{\exiu\in\ctrlset}{\arg\min}\left\{\qnn\left(
    \begin{bmatrix}
    \ffun\left(\begin{bmatrix}
        \xv \\ \uv
    \end{bmatrix}\right)\\\exiu
    \end{bmatrix}; \qparam\right)\right\}.
\end{align}
\end{prop}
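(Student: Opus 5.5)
Note first that the statement as written says $\optu(\xv,\uv) > 0$, but $\optu$ is an argmin in $\ctrlset$, not a scalar. I will read the condition as positivity of the optimal value, $\qnn([\ffun([\xv;\uv])\tp\ \optu(\xv,\uv)\tp]\tp;\qparam) > 0$, which is clearly what is intended.

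\textbf{Reduction to the definition.} The argument unpacks the definition of an invariance-violating sample, \eqref{eq:inv_violate_check}. Condition \eqref{eq:qnet_says_safe} appears verbatim in the proposition, so only \eqref{eq:exist_ctrl_check} needs work. Under the reading above, \eqref{eq:exist_ctrl_check} and the positivity condition are literally the same statement, provided the minimum in \eqref{eq:exist_ctrl_check} is attained. Attainment is what makes $\optu$ well defined.

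\textbf{Attainment.} The map $\exiu\mapsto\qnn([\ffun([\xv;\uv])\tp\ \exiu\tp]\tp;\qparam)$ is a ReLU network with one input block fixed, so it is continuous and piecewise affine. \Cref{prob:fi_qnet} assumes $\Xset\times\ctrlset$ is a union of H-polytopes, so $\ctrlset$ (a projection of a finite union of compact polytopes) is compact. A continuous function on a nonempty compact set attains its minimum, so $\optu(\xv,\uv)$ exists, possibly non-uniquely. Any minimizer gives the same optimal value, so the iff does not depend on which one is chosen.

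\textbf{The problem is a MILP.} $\ffun([\xv;\uv])$ is a fixed vector because $\xv,\uv$ are given, so there is no need to encode $\ffun$. Even so, since $\ffun$ is PWA, \eqref{eq:pwa_to_milp} could encode it if desired. The network $\qnn$ is encoded exactly by \eqref{eq:relu_to_milp}. For this I need valid pre-activation bounds $\vvlb_\is,\vvub_\is$, which are finite because the input domain is bounded; CROWN or interval propagation supplies them soundly. Membership $\exiu\in\ctrlset$, a union of polytopes, is encoded with binary selectors exactly as in \eqref{eq:pwa_to_milp}. The objective is the scalar output $\xv_\depth$, which is linear.

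\textbf{Exactness and the main obstacle.} Exactness of the big-M encoding holds because, for any bounds that truly contain the pre-activations, the feasible set projected onto $(\exiu,\xv_\depth)$ coincides with the graph of the network over $\ctrlset$. The optimal MILP value therefore equals the true minimum. This step is the main subtlety: the bounds must be sound, not tight, and the big constant $\bigM$ must dominate over the compact domain. Once that holds, the equivalence is immediate in both directions.
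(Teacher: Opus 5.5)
Your proposal is correct and follows the same route as the paper, whose proof notes that \eqref{eq:inv_check} is a MILP via the encodings \eqref{eq:relu_to_milp} and \eqref{eq:pwa_to_milp} and then unfolds the definition \eqref{eq:exist_ctrl_check}. Your additional points make explicit details the paper's two-line argument leaves implicit: reading $\optu(\xv,\uv)>0$ as positivity of the optimal value, securing attainment of the minimum through compactness of $\ctrlset$ and continuity of $\qnn$, and observing that $\ffun$ need not be encoded because $\ffun([\xv;\uv])$ is a constant for fixed $(\xv,\uv)$.
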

\begin{proof}
    Since we assume $\ffun$ is PWA and $\qnn(\cdot; \qparam)$ is ReLU, \eqref{eq:inv_check} is a MILP from \eqref{eq:relu_to_milp} and \eqref{eq:pwa_to_milp}.
    The rest of the proof follows from the definition of an invariance-violating sample from \eqref{eq:exist_ctrl_check}.
\end{proof}
If true constraint-violating samples and invariance-violating samples are found, per \cite{li2025verifiable}, we train the Q-network $\qnn(\cdot; \qparam)$ for one iteration with the loss function $\loss$ given below in \Cref{eq:FI_learning_loss}, then repeat from BGB \cite{yang2025scalable}:
\begin{subequations}
\begin{align}
    \loss_{\conlbl} =& -\sum_{\is=1}^{\nconvio}\qnn\left(\begin{bmatrix}
        \xv_{\conlbl, \is} \\ \uv_{\conlbl, \is}
    \end{bmatrix}; \qparam\right),\\
    \begin{split}
    \loss_{\invlbl}=& \sum_{\is=1}^{\ninvvio}\Biggl(\qnn\Biggl(
    \begin{bmatrix}
    \ffun\left(\begin{bmatrix}
        \xv_{\invlbl, \is} \\ \uv_{\invlbl, \is}
    \end{bmatrix}\right)\\\optu
    \end{bmatrix}; \qparam\Biggr) \\
    &- \qnn\left(\begin{bmatrix}
        \xv_{\invlbl, \is} \\ \uv_{\invlbl, \is}
    \end{bmatrix}; \qparam\right)\Biggr),\end{split}\\
    \loss =& \frac{\loss_{\conlbl} + \loss_{\invlbl}}{\nconvio + \ninvvio}.\label{eq:FI_learning_loss}
\end{align}
\end{subequations}

If not, we need to formally verify that there exists no $\xv \in \Xset$ and $\uv \in \ctrlset$ that violates \eqref{eq:con_violate_check} and \eqref{eq:inv_violate_check} (i.e.~check every possible point in the domain, not just samples).
For \eqref{eq:con_violate_check}, the condition can be checked using a MILP \cite{li2025verifiable}, which also returns a true counterexample if found.
If \eqref{eq:con_violate_check} is violated, we similarly train $\qnn(\cdot; \qparam)$ for one iteration with $\loss_{\conlbl}$ and repeat from BGB.
Otherwise, we proceed to check violations to \eqref{eq:inv_violate_check} for the whole domain.

Unfortunately, a complete verification of \eqref{eq:inv_violate_check} for all $\xv \in \Xset$ and $\uv \in \ctrlset$ involves a bilevel MILP, which is very difficult to solve \cite{fischetti2017new}.
Instead, \cite{li2025verifiable} proposed a sound verification technique by approximating an optimal controller with a ReLU network $\optctrlnn(\cdot; \optctrlparam):\R^{\nx}\times\R^{\nctrl}\to\R^{\nctrl}$ with supervised learning:
\begin{align}
    \exiu = \optctrlnn\left(\begin{bmatrix}
        \xv\\
        \uv
    \end{bmatrix}; \optctrlparam\right)\approx\optu(\xv, \uv),
\end{align}
where labels can be generated by sampling or solving MILPs.
Then, if the MILP
\begin{align}\label{eq:nn_ctrl_milp}
\begin{split}
    \regtext{find}\ & \xv, \uv\\
    \st &\qnn\left(\begin{bmatrix}
        \xv \\ \uv
    \end{bmatrix}; \qparam\right) \leq 0,\\
    & \qnn\left(\begin{bmatrix}
        \ffun\left(\begin{bmatrix}
            \xv \\ \uv
        \end{bmatrix}\right)\\
        \optctrlnn\left(\begin{bmatrix}
            \xv \\ \uv
        \end{bmatrix}; \optctrlparam\right)
    \end{bmatrix}; \qparam\right)\geq\eps,
\end{split}
\end{align}
is infeasible for some small positive number $\eps\in\R_+$, then we can conclude that $\qnn(\cdot; \qparam)$ is safe and \Cref{prob:train_qnet} has been solved.
Otherwise, the verification is inconclusive, despite \eqref{eq:nn_ctrl_milp} returning a counterexample candidate.
Here, \cite{li2025verifiable} would again ignore false positive cases, penalizing $\qnn(\cdot; \qparam)$ even when it might already be safe.
Instead, we again verify if the candidate is a true invariance-violating counterexample using \Cref{prop:inv_check}.
If it is, we train $\qnn(\cdot; \qparam)$ for one iteration with $\loss_{\invlbl}$ and repeat from BGB.
If not, we instead opt to \textit{tighten the soundness} of the verification by training $\optctrlnn(\cdot; \optctrlparam)$ with more samples and epochs, until either $\qnn(\cdot; \qparam)$ is verified safe, or a true counterexample is found.
A flowchart of the method is summarized in \Cref{fig:pseudocode}.

\begin{rem}
Similar to \cite{li2025verifiable}, we note that many of the techniques used here (such as MILP and sampling) scale poorly with dimensions and thus defeat the purpose of learning an HJ value function in the first place.
Moreover, as with many counterexample neural network repair methods \cite{yang2025scalable, li2025verifiable, tran2020verification, tran2023verification, dai2021lyapunov}, a ``whac-a-mole'' problem might occur \cite{chung2021constrained, chung2025provably}, where making the counterexamples safe might make other samples unsafe, providing no guarantees in the training being successful.
We present this preliminary method in the paper to shine light on how a provably-correct ReLU Q-network can be trained, which has not previously been done, to the best of our knowledge.
We leave investigation of how to repair Q-networks, potentially by set-based training methods \cite{chung2021constrained, chung2025provably, harapanahalli2024certified} or by-construction \cite{konstantinov2023new, min2024hardnet}, to future work.
\end{rem}
\section{Experiments}\label{sec:exp}

We now assess our method by validating \methodname on double integrator benchmarks drawn from the literature \cite{chung2025provably, fisac2019bridging, li2025verifiable}.
All experiments were performed on a desktop computer with a 24-core i9 CPU, 32 GB RAM, and an NVIDIA RTX 4090 GPU.
Our method is implemented in Python.
We are preparing our code for open-source release.

\subsection{Effects of Hard-Constrained Forward Invariance}\label{sec:exp_poly_fi}

To compare safe-by-construction neural networks with neural network repair methods, we replicate the experiment from \cite[Sec. VI-A]{chung2025provably}, where we wish to train a neural network controller to certify forward invariance for a double integrator in a non-convex region.
Our hypothesis is that \methodname creates a more optimal safe control policy, since neural network repair methods need to optimize both safety and performance at the same time, whereas safe-by-construction methods like ours only need to optimize performance since safety is always given.

\subsubsection{Experiment Setup}\label{sec:exp_poly_fi_setup}
We consider double integrator dynamics with $\FIset$ defined as the union of two H-polyhedrons (see \cite[Sec. VI-A.3]{chung2025provably} for details and \Cref{fig:fi_hpoly} for visualization).
The neural network controller is a ReLU network with 1 hidden layer of width 3, and the objective is to approximate the policy $\unow = \begin{bmatrix}
    -2 & -1
\end{bmatrix}\tp \xnow$ without violating the forward invariant constraints.

For \methodname, the classification network has 2 hidden layers of width 20 with $\logscale = 2.5$, $\logmean = 2$ and is trained with $10^5$ samples using the Adam optimizer \cite{kingma2014adam} in PyTorch \cite{paszke2019pytorch} for $10^5$ iterations and learning rate decaying from $0.1$ to $8\times10^{-5}$.
The HardNet-Cvx was trained similarly using $10^3$ samples using Adam \cite{kingma2014adam} and CVXPYlayers \cite{agrawal2019differentiable} for $100$ iterations with a learning rate of $0.1$.
For comparison, we first na\"ively train using only the objective policy with the same hyperparameters as \cite{chung2025provably}.
We then repair the na\"ively trained policy by interleaving repair steps (same setup and hyperparameters as \cite{chung2025provably}) with training steps on the objective policy (same hyperparameters as na\"ive training with learning rate of $10^{-4}$) similar to \cite{chung2021constrained}, in an attempt to further lower their objective loss.
We report the MSE loss with $10^4$ samples of the objective policy within $\FIset$ at each stage for evaluation.

\begin{figure}[t]
\centering
    \includegraphics[width=1\columnwidth]{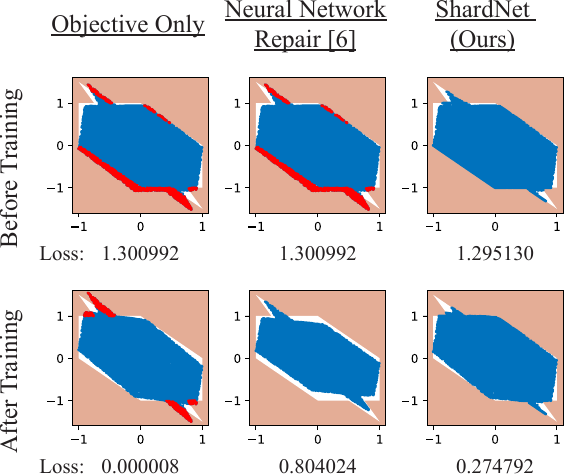}
\caption{
Results of the experiment to certify forward invariance for a double integrator over non-convex safe sets defined by H-polyhedrons.
The complement of the safe states are shown in pink, the states at the next timestep (with current states sampled from the safe sets) using the neural network policy is shown in blue, and the next states violating the forward invariance constraints are highlighted in red.
The objective loss at each stage is listed below each subfigure.
\methodname is safe by construction, allowing the objective to be optimized without needing to balance for safety.
}\label{fig:fi_hpoly}
\vspace*{-0.5cm}
\end{figure}

\subsubsection{Results and Discussion}
The results of this experiment is shown in \Cref{fig:fi_hpoly}.
After na\"ively training, the objective loss decreases to $8\times10^{-6}$, though many samples violated the forward invariance constraints.
The baseline \cite{chung2025provably} successfully repairs the network, but raises the objective loss to $0.8$.
On the other hand, under the \methodname structure, the control policy is automatically safe before training even starts, allowing the method to focus on optimizing the objective loss to $0.3$.
This confirms our hypothesis, showing that \methodname is able to optimize over a union of polyhedral constraints while \textit{always} being safe, whereas repair methods may not always train successfully; even if they do, having to balance safety with the objective at the same time can yield suboptimal performance.

\subsection{Effects of Q-Network Repair Method}\label{sec:exp_q_network}
To assess the quality of our Q-network training method, we compare the ReLU Q-network for a double integrator trained under \cite{li2025verifiable} and our pipeline.
Our hypothesis is that our complete verification method would result in a Q-network with a larger sub-zero level set than \cite{li2025verifiable}.

\subsubsection{Experiment Setup}
We consider the double integrator example from \cite{li2025verifiable} with the same dynamics, domains, and constraints.
We seek to train a ReLU Q-network with 1 hidden layer of width 64 to satisfy \eqref{eq:con_qnet} and \eqref{eq:fi_qnet}.

To begin, we first approximate the Q-network using SAC \cite{fisac2019bridging} with an actor with 2 hidden layers of width 128 using the annealing schedule from \cite{hsu2021safety}, then attempt to repair it using \cite{li2025verifiable} and our pipeline.
For both methods, the hyperparameters follow mostly those from \cite{yang2025scalable}.
The ReLU controller has 2 hidden layers of width 32, and is allowed to train for $6\times10^3$ iterations with Adam \cite{kingma2014adam} and a learning rate of $1\times10^{-4}$ each time before \eqref{eq:nn_ctrl_milp} is evaluated.
For \cite{li2025verifiable} specifically, we use CROWN \cite{zhang2018efficient} to soundly verify the invariance-violating BGB samples.
Finally, we report the number of iterations needed to finetune (whether through BGB \cite{yang2025scalable}, \eqref{eq:con_violate_check}, or \eqref{eq:nn_ctrl_milp}) the Q-network to safety, as well as the area of the sub-zero level set for the verified Q-network, computed by uniformly sampling in a $10^3\times10^3$ grid in the state domain and $512$ points in the control domain.

\subsubsection{Results and Discussion}
The results of this experiment is shown in \Cref{fig:fi_value}.
Surprisingly, \cite{li2025verifiable} produces a non-empty sub-zero level set after $444$ iterations, which is the opposite of what was reported.
This is possibly due to differences in hyperparameters, as \cite{li2025verifiable} never disclosed theirs.
Regardless, our pipeline produces a 3 times larger sub-zero level set in fewer ($113$) iterations for the same network structure.
This result disproves the claim that verifiable Q-networks cannot be trained on ReLU networks \cite{li2025verifiable}, enabling the synthesis of forward-invariant policies on value functions using \methodname, which we show next.

\begin{figure}[t]
\centering
    \includegraphics[width=1\columnwidth]{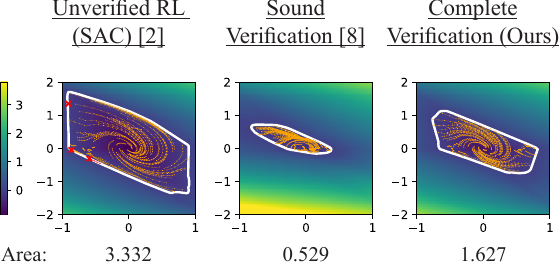}
\caption{
Results of \methodname policies on different ReLU Q-network repairing methods.
The network before repair is shown on the left.
The minimum value output of each state is shown in a yello-blue gradient, with the colorbar displayed on the left.
For each method, we outline the sub-zero level set in white and report their area below.
The rollouts of $100$ trajectories using \methodname policies are shown in orange, with constraint or invariance violating trajectories shown in red (with end-points marked with X).
\methodname do not induce any violations on verified Q-networks, and our Q-network repair method produces a 3 times larger sub-zero level set compared to \cite{li2025verifiable}.
}\label{fig:fi_value}
\vspace*{-0.5cm}
\end{figure}

\subsection{Forward-Invariant Control from Learned Value Function}
Finally, we verify our formulation in \Cref{sec:fi_valuefn} by synthesizing forward-invariant \methodname controllers on the Q-networks trained in \Cref{sec:exp_q_network}.

\subsubsection{Experiment Setup}
For each of the 3 ReLU Q-networks trained in \Cref{sec:exp_q_network}, we first use Reachable Polyhedral Marching (RPM) \cite{vincent2025reachable} to exactly convert them into PWA functions.
Then, we follow the procedure in \Cref{sec:fi_valuefn} by training classification networks with the same setup as \Cref{sec:exp_poly_fi_setup} except with $2 \times 10^4$ iterations, learning rate decaying from $0.1$ to $0.02$, and $3.5 \times 10^3$ samples.
The control policy was also trained to approximate the stabilizing PD control policy $\unow = \begin{bmatrix}
    -2 & -1.5
\end{bmatrix}\tp \xnow$ with the same setup except with $300$ iterations and learning rate decaying from $0.1$ to $0.004$.

For evaluation, we sampled $3 \times 10^3$ points from within the sub-zero level set of the Q-network, then roll out the trajectory for $10$ timesteps using the \methodname policy.
We report the number of violations on the safety constraints and forward invariance, the latter is detected when \methodname cannot return a control that stays within the sub-zero level set, which is caused by a violation of the assumptions in \Cref{prob:hardnet_noncvx}.

\subsubsection{Results and Discussion}
We visualize $100$ of the resulting trajectories in \Cref{fig:fi_value}.
This is a considerably more difficult problem compared with \Cref{sec:exp_poly_fi}, since the number of safe sets ranges from $5,325$ (for \cite{li2025verifiable}) to $13,251$ (for \cite{fisac2019bridging}) instead of just 2.
Despite this, all trajectories from \methodname are still $100\%$ safe on the verified Q-networks trained with \cite{li2025verifiable} and our method.
On the unverified Q-network trained with \cite{fisac2019bridging}, $0.667\%$ of the trajectories violate the safety constraints, and $0.167\%$ of the trajectories violate forward invariance.
The results show that \methodname can still maintain safety on unverified Q-networks, if $\approx1\%$ of failure can be tolerated.
This is especially promising, since Q-networks can be difficult to repair for more complicated systems and scenarios \cite{li2025verifiable}.
\section{Conclusion}

This paper introduces \methodname, a neural network architecture designed to enforce hard constraints, expressed as unions of input-dependent H-polyhedrons, by construction.
By integrating a classification network with a differentiable projection layer, \methodname ensures that the network's output remains within the safe set throughout the training process.
The utility of \methodname is demonstrated by synthesizing forward-invariant control policies for systems with PWA dynamics and learned ReLU value functions.
Experimental results show that \methodname maintains formal safety guarantees without sacrificing performance, achieving lower objective losses compared to traditional neural network repair methods, while ensuring $100\%$ safety on verified Q-networks.

\subsubsection*{Limitations}
We observe 3 major limitations for \methodname.
Firstly, we have only applied \methodname on low-dimensional examples drawn from the literature \cite{chung2025provably, li2025verifiable, fisac2019bridging}.
We are exploring extension to more complex systems.
Secondly, the inference time of \methodname is bottlenecked by solving a differentiable LP projection.
While this issue is inherited from our use of HardNet-Cvx \cite{min2024hardnet}, future work may explore replacing it polyhedral constraints as a union of halfspace constraints, where closed-form projection exists \cite{min2024hardnet}.
Finally, when \methodname is applied on unverified Q-networks, the control policy may fail to return an output when no safe action is possible.
In this case, \methodname needs to be paired with a backup policy to recover safety or to fail gracefully.
Future work may explore the co-design of such a backup policy that is aware of the capabilities of \methodname.

\renewcommand{\bibfont}{\normalfont\footnotesize}
{\renewcommand{\markboth}[2]{}
\printbibliography}

\end{document}